\theoremstyle{plain}
\newtheorem{thm}{Theorem}[section]
\newtheorem{lem}[thm]{Lemma}
\theoremstyle{definition}
\newtheorem{defn}{Definition}[section]
\theoremstyle{remark}
\newtheorem*{note}{Note}
\title{Improved distance sensitivity oracles via tree partitioning}
\author[1]{Ran Duan}
\author[2]{Tianyi Zhang}
\affil[1]{Institute for Interdisciplinary Information Science, Tsinghua University, China\\
	\texttt{duanran@mail.tsinghua.edu.cn}}
\affil[2]{Institute for Interdisciplinary Information Science, Tsinghua University, China\\
	\texttt{zhangty12@mails.tsinghua.edu.cn}}
\date{}
\begin{document}

\maketitle

\begin{abstract}
We introduce an improved structure of \textsl{distance sensitivity oracle} (DSO). The task is to pre-process a non-negatively weighted graph so that a data structure can quickly answer replacement path length for every triple of source, terminal and failed vertex. The previous best algorithm constructs in time \footnote{$\tilde{O}(\cdot)$ suppresses poly-logarithmic factors.}$\tilde{O}(mn)$ a distance sensitivity oracle of size $O(n^2\log n)$ that processes queries in $O(1)$ time. As an improvement, our oracle takes up $O(n^2)$ space, while preserving $O(1)$ query efficiency and $\tilde{O}(mn)$ preprocessing time. One should notice that space complexity and query time of our novel data structure are asymptotically optimal.
\end{abstract}

\section{Introduction}
Let $G = (V, E)$ be a directed graph with $n$ vertices and $m$ edges, and edge weights $\omega: E\rightarrow R^+\cup \{0\}$. We wish to pre-process the graph and build a distance sensitivity oracle that answers for every triple of $(s, t, f)$ the length of replacement path from $s$ to $t$ that circumvents the assumed failed vertex $f$. We call any data structure that answers such queries \textsl{distance sensitivity oracle} (DSO). 

A close relative of this DSO problem is that we consider edge failures instead of vertex failures. In this paper, we are only concerned with failed vertices, though all our structures can be easily extended (see \cite{Demet08}) to handle edge-failure queries without any loss in space / time efficiency.

Motivation for vertex failures is routing where network nodes occasionally undergo crash failures. In the face of single node-failures, we might not afford to re-compute all-pair shortest paths from scratch before the failed node comes back on-line. A distance sensitivity oracle may lend us help in the sense that re-computing APSP becomes unnecessary. Motivation for edge failures comes from in Vickery pricing \cite{Suri02}, where one wishes to measure, for every pair of source and target as well as a failed edge, by how much the shortest distance would rise if this designated edge were to shut down.

\subsection{Existing algorithms}
The naive approach is that we pre-compute and store the length of all $O(n^3)$ possible replacement path lengths, which incurs intolerable space complexity. \cite{Demet02} proposes the first DSO that occupies only near-quadratic space. More specifically, \cite{Demet02}'s DSO has space complexity $O(n^2\log n)$ and $O(1)$ query time. Space complexity of constant query time DSO has not been improved ever since.

DSO in \cite{Demet02} demands a somewhat high preprocessing time complexity of $O(mn^2)$, which was improved to $\tilde{O}(mn^{\frac{3}{2}})$ in the journal version \cite{Demet08} while the space complexity was blown up to $O(n^{2.5})$. Cubic time preprocessing algorithm was first obtained by \cite{Bernst08} and shortly improved from $\tilde{O}(n^2\sqrt{m})$ to $\tilde{O}(mn)$ in \cite{Bernst09}, while maintaining $O(n^2\log n)$ space and $O(1)$ query time. Note that $O(n^2\log n)$ and $\tilde{O}(mn)$ are basically optimal up to poly-logarithmic factors, as discussed in \cite{Bernst09}. Therefore, surpassing \cite{Bernst09}'s construction time has been deemed hard from then.

Since the publication of \cite{Bernst09}, the community's interest has diverged to seeking truly sub-cubic preprocessing time algorithms. F. Grandoni and V. Williams (\cite{virgi12}) obtained truly sub-cubic preprocessing time bound $O(Mn^{2.88})$, if one should tolerate a sub-linear query time of $O(n^{0.7})$; here all edge weights are assumed to be integers within interval $[-M, M]$.

\subsection{Our contributions}
In this paper we present a DSO construction that improves upon \cite{Bernst09}.

\begin{thm}
For any directed non-negatively weighted graph $G = (V, E, \omega)$, a DSO with $O(n^2)$ space complexity and $O(1)$ query time exists. Also, such DSO can be preprocessed in time $\tilde{O}(mn)$.
\end{thm}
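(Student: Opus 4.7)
The plan is to keep the $\tilde{O}(mn)$ distance computation engine from \cite{Bernst09} essentially unchanged, and instead redesign the \emph{output} data structure so that each pair $(s,t)$ uses only $O(1)$ words of storage on average, saving the $\log n$ factor. The $O(n^2\log n)$ bound in \cite{Bernst09} stems from the fact that, per pair $(s,t)$, one records $O(\log n)$ recovery pointers, one for each level of a hierarchical decomposition of the replacement path. My aim is to collapse those $\log n$ pointers into $O(1)$ by partitioning each shortest path tree into subtrees (hence the title) that are small enough that intra-fragment behavior can be tabulated separately, but large enough that inter-fragment behavior is determined by a few precomputed ``portal'' distances.

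Concretely, for each source $s$ I would fix a single-source shortest path tree $T_s$ rooted at $s$, and partition $T_s$ into $O(n/L)$ vertex-disjoint subtrees of size $O(L)$ via a heavy-path-style decomposition, yielding a portal set $B_s \subseteq V$ of size $O(n/L)$. For a query $(s,t,f)$ with $f$ lying on the $T_s$-path from $s$ to $t$, I would argue that the optimal replacement path either (i) routes through one of a constant number of portals in $B_s$ whose identity is determined by the fragment containing $f$, in which case its length is $d(s,b) + d_f(b,t)$ for that portal $b$, or (ii) is a short, fragment-local detour entirely captured by a separate second-level structure indexed by $(s,\textrm{fragment of }f,t)$. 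The first case is stored with $O(1)$ bits per $(s,t)$ identifying which portal to read, against global tables $d(s,\cdot)$ and $d_\cdot(\cdot,t)$ of combined size $O(n^2/L)$. The second case is stored as one length per $(s,\textrm{fragment},t)$ triple, costing $O(n^2 \cdot n/L / L \cdot L) = O(n^2/L)$ if the fragments are charged carefully; balancing $L$ gives total space $O(n^2)$.

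The main obstacle will be the delicate combinatorial claim that $O(1)$ bits per $(s,t)$ suffice to name the correct portal in case (i), since in principle the optimal detour might exit $T_s$ anywhere along the path; I expect to handle this by showing that only the fragment of $f$ together with a single additional ``above-or-below the portal'' bit is needed, using the standard observation that replacement paths leave and re-enter $\pi(s,t)$ at most once. A second subtlety is that the same fragment decomposition must serve simultaneously as source-side and target-side partition, which I would resolve by a symmetric two-level tree partitioning (one per source, one per target) and by arguing that the source-fragment and target-fragment jointly determine the query type in $O(1)$. Preprocessing fits in $\tilde{O}(mn)$ by piggybacking on \cite{Bernst09}: the portal distance arrays require $O(n/L)$ auxiliary single-source shortest path computations per source, and the local-detour table is filled in during the same replacement-path sweep.
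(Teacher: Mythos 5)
There is a genuine gap, and it sits exactly where the real difficulty of the problem lies. Your plan stores, per pair $(s,t)$, $O(1)$ bits that ``name the correct portal'' and, per $(s,\textrm{fragment of }f,t)$, a single length for the fragment-local case. Neither piece can work as stated. First, for a fixed $(s,t)$ the answer $\|st\diamond f\|$ varies with $f$: as $f$ slides along $st$, the replacement path may diverge and re-converge at many different places, may pass through a portal before $f$, a portal after $f$ (which lives in the \emph{target-side} partition of $\widehat{T}_t$, not in $B_s$), or may skip over a long block of consecutive portals entirely. No constant number of bits attached to $(s,t)$ alone can select the right portal for all $f$, and the ``leaves and re-enters $st$ at most once'' observation does not rescue this: it constrains the shape of one detour, not how the detour changes with $f$. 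The paper's structure handles precisely this by storing, for each source $s$ and each of the $O(n/L)$ marked targets, sparse-table entries indexed by powers of two over the \emph{sequence of marked vertices} on the path (both single-failure values $\|st\diamond u_{k-2^i}\|$ and interval/bottleneck values such as $\|st\diamond[v_{l-2^i},u_{k-2^j}]\|$), and then uses the triple path lemma several times to reduce an arbitrary query to these entries plus ``short'' queries; the interval-avoiding entries are what cover the case where the detour skips all nearby portals, and your proposal has no analogue of them. Second, your space accounting for case (ii) is off by a factor of $n$: one length per $(s,\textrm{fragment},t)$ triple is $O(n\cdot \frac{n}{L}\cdot n)=O(n^3/L)$, not $O(n^2/L)$, and in any case a single stored length cannot answer $\|st\diamond f\|$ for every $f$ in the fragment, since different failures in the same fragment generally have different replacement distances.

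Two further points. Even after the reduction to fragment-local (``$L$-short'') queries, getting from $O(n^2\log L)=O(n^2\log\log n)$ space down to $O(n^2)$ is not automatic: the paper needs a \emph{second} tree partition into fragments of size $\log\log^2 n$ and then a Four-Russians tabulation, which works only because the tabulated object --- the matrix of divergence/convergence offsets $(|s\Delta|,|\nabla t|)$ over a tiny fragment --- has $(L\cdot L')^{(L')^2}=o(n)$ possible configurations, together with one stored value $\|uv\diamond[u\oplus 1,v\ominus 1]\|$ per pair to reconstruct the middle segment; your single-level ``tabulate intra-fragment behavior'' with $L$ chosen to balance $n^2/L$ cannot be enumerated this way. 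Finally, the preprocessing is not mere piggybacking on \cite{Bernst09}: $O(n/L)$ extra SSSP-type computations per source is $\Theta(n^2/L)$ computations overall, which exceeds $\tilde O(mn)$ for polylogarithmic $L$, and the quantities you need are interval-avoiding (or bottleneck) values, which the paper computes by replacing interval admissible functions with bottleneck ones and locating bottleneck vertices via range-maximum structures and a binary-search routine in $O(\log n)$ time per entry. So the overall architecture (partition, reduce, handle short paths) is in the right spirit, but the key lemmas that make each stage constant-time and quadratic-space are missing or incorrect as proposed.
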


In Bernstein \& Karger's work (\cite{Bernst09}), the space / query time was $O(n^2\log n)$ and $O(1)$. So compared with \cite{Demet08}'s result, our construction shaves off the last $\log n$ factor in the space complexity, leading to a quadratic space consumption, while preserving constant time query efficiency. Plus, our DSO can also be constructed in $\tilde{O}(mn)$ time as in \cite{Bernst09}, which is nearly optimal. One should notice that the space complexity of our DSO is asymptotically \textbf{optimal}. To see that, we argue that merely answering all-pairs distances requires $\Omega(n^2)$ space. Consider when $G$ is a complete directed graph with distinct edge weights drawn from $(1, 2]$. In this case, for any $s, t\in V$, the shortest path from $s$ to $t$ is simply the edge $(s, t)\in E$. Hence, to answer the distance from $s$ to $t$, the data structure can do nothing but store $\omega(s, t)$; otherwise $\omega(s, t)$ can never be precisely known. Thereby, the total storage is at least $|E| = \Omega(n^2)$. For other lower bounds in undirected graphs, see \cite{Patr10}.

The observation comes from Demetrescu's original work. \cite{Demet02}'s construction basically applies the idea of sparse table, where each pair of source and target are associated with $O(\log n)$ sparse table entries, thus resulting in a total storage of $O(n^2\log n)$. The preliminary idea is that we do not store sparse table entries at every source-terminal pair. More specifically, for each single source shortest path tree (SSSP), we designate only a proportion of all tree vertices which are associated with sparse table entries, hence making our data structure even sparser. The set of all designated vertices should be carefully chosen with respect to the topological structure of the SSSP tree.

With the sparser data structure, we can answer queries $(s, t, f)$ when $f$ keeps distance from both of $s$ and $t$. So the bottleneck lies in degenerated cases where $f$ is very close to one of the endpoints $s$ or $t$. For degenerated cases, we can use much smaller sparse tables to cover short paths, resulting in a DSO that only occupies $O(n^2\log\log n)$ space. To obtain an optimum of $O(n^2)$ space complexity, we would need to apply a tabulation technique (more known as the ``Four Russians'' \cite{4Russians}).

Our preprocessing algorithm heavily relies on the notion of \textsl{admissible functions} from \cite{Bernst08}. The idea is that we substitute ``bottleneck'' vertices for intervals in the original construction, without harming the correctness of query algorithm. In this way, we make the DSO easier to initialize.

\subsection{Related work}
There are several generalizations of the distance sensitivity problem. In \cite{Duan09}, the authors considered the scenario where two instead of one vertices could fail. The paper presented a distance oracle with $O(n^2\log^3 n)$ space complexity and $O(\log n)$ query time. As it turned out, things got far more complicated than single vertex-failures, and sadly no non-trivial polynomial preprocessing algorithms were known.

There are works (e.g., \cite{Thor05}\cite{Demet04}\cite{Demet06}\cite{Chec14}) mainly concerned with dynamically maintaining all pairs shortest paths (APSP). Such data structures can solve distance problems if subsequent failures are cumulative, but update time would be as large as $O(n^{2.75})$.

If one should sacrifice preciseness for space efficiency, one may consider approximate distance oracles for vertex failures. \cite{Basw10} considered approximating the replacement path lengths where a single vertex could crash. In \cite{Chec12}, the authors focused on data structures that approximately answer minimal distance of constrained paths that do not pass through a designated set of failed edges. For fully dynamic approximate APSP, one can refer to \cite{Henz13}; especially, for planar graphs, \cite{Ittai12} may provide useful results. 

There are remotely related problems such as (partially) dynamic single-source shortest paths and reachability. Papers \cite{Henz14a} and \cite{Henz14b} discussed these topics in depth. Some other loosely related work concerns the construction of spanners and distance preservers resilient to one edge / node failure.

\section{Preliminaries}
Suppose we are given a directed graph $G = (V, E, \omega)$ with non-negative edge weights $\omega: E\rightarrow R^+\cup \{0\}$. In this section, we summarize the notations or assumptions that are used throughout this paper. More or less, we inherit the conventions from \cite{Duan09}.

\begin{itemize}
\item Our data structures \& algorithms are implemented on $\Omega(\log n)$-word RAM machines. We will leverage its strength in computing the most significant set bit (\cite{fusion}). Although in previous works on DSO (\cite{Demet04}, \cite{Demet08}, \cite{Bernst08}, \cite{Bernst09}) $\Omega(\log n)$-word RAM model was not explicitly assumed, this assumption was not expendable since their algorithms required memory indexing and computing logarithms in constant time.
\item We call a data structure $\langle f(n), g(n)\rangle$, if its space complexity is at most $f(n)$ and its query time is at most $g(n)$.
\item For each pair of $s, t\in V$, the weighted shortest path from $s$ to $t$ is unique. This assumption is without loss of generality since we can add small perturbations to break ties (e.g., \cite{Demet08}, \cite{Bernst09}).
\item For each pair of $s, t\in V$, let $st$ denote the shortest path from $s$ to $t$.
\item Let $p$ be a simple path. Denote by $\|p\|$ and $|p|$ the weighted and un-weighted length of path $p$.
\item For each $s\in V$, let $T_s$ be the single-source shortest path tree rooted at $s$; let $\widehat{T}_s$ be the single-source shortest path tree rooted at $s$ in the reverse graph $\widehat{G}$ where every directed edge in $G$ is reversed.
\item For each query $(s, t, f)$, we only consider the case when $f$ lies on the path $st$, because verification can be done by checking whether $\|st\| = \|sf\| + \|ft\|$.
\item For each vertex set $A$, let $st\diamond A$ denote the shortest path from $s$ to $t$ that avoids the entire set $A$. For instance, $st\diamond \{f\}$ (abbreviated as $st\diamond f$) denotes the replacement path, and $st\diamond [u, v]$ refers to the shortest path that skips over an entire interval $[u, v]\subseteq st$. Here for $[u, v]$ to be a properly defined interval on $st$, it is required that both of $u$ and $v$ are on $st$, and either $u = v$, or $u$ lying between $s$ and $v$.
\item Let $s\oplus i$ and $s\ominus i$ be the $i^{th}$ vertex after and before $s$ on some path that can be learnt from context.
\end{itemize}

By uniqueness of shortest paths, it is easy to verify that any path of the form $st\diamond f$ or $st\diamond [u, v]$ must diverge from and converge with $st$ \textbf{for once} (\cite{Demet08}), with divergence on path $sf$ and convergence on $ft$. Let $\Delta_{s, t, f}$ and $\nabla_{s, t, f}$ be the vertices at which divergence and convergence take place, respectively. More often than not, we ignore those indices $s, t, f$ and simply write $\Delta, \nabla$.

\section{The sparse table technique \label{old}}
In this section, we review the $\langle O(n^2\log n), O(1)\rangle$ DSO devised in \cite{Demet08}. Basically, its design utilizes the idea of \textsl{sparse table}.

\subsection{Data structure}
For every pair of $s, t\in V$, build the following.

\note Indices $i, j$ are non-negative integers throughout this paper.

\begin{enumerate}[(\romannumeral1)]
\item The values of $\|st\|$ and $|st|$.
\item For every $s, t$ and $2^i < |st|$, the value of $\|st\diamond (s\oplus 2^i)\|$ and $\|st\diamond (t\ominus 2^i)\|$.
\item For every $s, t$ and $2^{i+1} < |st|$, the value of $\|st\diamond [s\oplus 2^i, s\oplus 2^{i+1}]\|$ and $\|st\diamond [t\ominus 2^{i+1}, t\ominus 2^i]\|$.
\item Level ancestor data structures (\cite{Bender03}) for every $T_s$ and $\widehat{T}_s$.
\end{enumerate}

Briefly discuss the space complexity. Evidently (\romannumeral1) takes $O(n^2)$ space; both of (\romannumeral2) and (\romannumeral3) have space complexity $O(n^2\log n)$; for (\romannumeral4), by linear space construction of level ancestor data structure from \cite{Bender03}, each $T_s$ ($\widehat{T}_s$) arouses a storage of $O(n)$, leading to a total space complexity of $O(n^2)$ ranging over all $s$. Thus, the overall space is $O(n^2\log n)$.

\subsection{Query algorithm}
For each query $(s, t, f)$, without loss of generality assume that $|sf|\leq |ft|$. If $|sf|$ or $|ft|$ is a power of 2, then $\|st\diamond f\|$ has already been computed during preprocessing (\romannumeral2). Otherwise, let $i, j$ be the maximum non-negative integers such that $2^i < |sf|, 2^j < |ft|$.

\begin{note}
	Both of $i, j$ can be computed in constant time on $\Omega(\log n)$-word RAM machines: $i$ ($j$) is equal to the index of the most significant set bit of $|sf|-1$ ($|ft|-1$), and according to \cite{fusion} the most significant set bit can be computed in constant time on a $\Omega(\log n)$-word RAM.\\
\end{note}

Consider the following cases.

\begin{enumerate}[(1)]
\item $st\diamond f$ skips the entire $[s\oplus 2^i, s\oplus 2^{i+1}]$ which contains $f$.\\
As we know from (\romannumeral3), $\|st\diamond [s\oplus 2^i, \oplus 2^{i+1}]\|$ can be directly retrieved.
\item $st\diamond f$ passes through $s\oplus 2^i$.\\
By definition of $i$, it must be that $f\ominus 2^i$ lies between $s$ and $s\oplus 2^i$. So in the case that $st\diamond f$ passes through $s\oplus 2^i$, it must also pass through $f\ominus 2^i$. Thereby, $\|st\diamond f\| = \|s(f\ominus 2^i)\| + \|(f\ominus 2^i)t\diamond f\|$; both terms are retrievable from storage in constant time.
\item $st\diamond f$ passes through $s\oplus 2^{i+1}$.\\
By definition of $i, j$, $2^{j+1}\geq |ft|\geq |sf| > 2^i$, which leads to $j\geq i$. Therefore, because $f$ comes after $s\oplus 2^i$, $f\oplus 2^j$ lies in range $[s\oplus 2^{i+1}, t)$. Since $st\diamond f$ passes through $s\oplus 2^{i+1}$, it is guaranteed to also pass through $f\oplus 2^j$. Decomposing $\|st\diamond f\| = \|s(f\oplus 2^j)\diamond f\| + \|(f\oplus 2^j)t\|$, both terms are already pre-computed.
\end{enumerate}

So far we have completed discussion of the $\langle O(n^2\log n), O(1)\rangle$ DSO in \cite{Demet08}.

\section{Admissible functions and the triple path lemma}
Our query algorithms will be frequently using the \textsl{triple path lemma} from \cite{Bernst08}. This lemma is based on the notion of \textsl{admissible functions}.

\begin{defn}[\cite{Bernst08}]
	A function $F_{s, t}^{[u, v]}$ is admissible if $\forall f\in [u, v]$, we have $\|st\diamond [u, v]\|\geq F_{s, t}^{[u, v]}\geq \|st\diamond f\|$.
\end{defn}

\begin{defn}[\cite{Bernst08}]
Two important admissible functions are $\max_{f\in [u, v]}\{\|st\diamond f\|\}$ and $\|st\diamond [u, v]\|$. We call them \textbf{bottleneck} and \textbf{interval} admissible functions, respectively.
\end{defn}

\begin{lem}[The triple path lemma (\cite{Bernst08})]
	Let $[u, v]$ be an interval on $st$, and $f\in [u, v]$ be a vertex. For any admissible function $F_{s, t}^{[u, v]}$, $\|st\diamond f\| = \min\{\|su\| + \|ut\diamond f\|, \|sv\diamond f\| + \|vt\|, F_{s, t}^{[u, v]}\}$.
\end{lem}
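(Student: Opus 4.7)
The plan is to prove both directions of the asserted equality separately. For the upper bound $\|st\diamond f\|\leq\min\{\cdot,\cdot,\cdot\}$, the inequality $\|st\diamond f\|\leq F_{s,t}^{[u,v]}$ is immediate from the definition of admissibility. For the term $\|su\|+\|ut\diamond f\|$, concatenating the $s\to u$ portion of $st$ with a shortest $u\to t$ path avoiding $f$ produces a walk from $s$ to $t$ that avoids $f$ (since $u$ lies on $sf$ and therefore precedes $f$ on $st$), so its length upper-bounds $\|st\diamond f\|$. The term $\|sv\diamond f\|+\|vt\|$ is handled symmetrically.

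The nontrivial direction is the lower bound. I would exploit the single divergence--convergence property of $P := st\diamond f$ emphasized at the end of the preliminaries: $P$ decomposes uniquely as a prefix along $st$ from $s$ to some $\Delta$ on $sf$, then a detour from $\Delta$ to some $\nabla$ on $ft$ whose interior lies entirely off $st$, then a suffix along $st$ from $\nabla$ to $t$. I would then split into three exhaustive cases according to how $\Delta,\nabla$ sit relative to $[u,v]$.

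Case A: $\Delta$ lies at or past $u$ on $st$. Then $u$ appears on the $s\to\Delta$ prefix of $P$, so the $s$-to-$u$ portion of $P$ is a sub-path of $st$ and has length $\|su\|$, while the remaining $u$-to-$t$ portion is a walk from $u$ to $t$ that avoids $f$, hence has length at least $\|ut\diamond f\|$. This yields $\|P\|\geq\|su\|+\|ut\diamond f\|$. Case B: $\nabla$ lies at or before $v$ on $st$; by the symmetric argument, $\|P\|\geq\|sv\diamond f\|+\|vt\|$. Case C: $\Delta$ lies strictly before $u$ on $st$ and $\nabla$ lies strictly after $v$. Because the detour from $\Delta$ to $\nabla$ meets $st$ only at its two endpoints, and both of those endpoints fall outside $[u,v]$, no vertex of $[u,v]$ appears anywhere on $P$. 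Hence $\|P\|\geq\|st\diamond[u,v]\|\geq F_{s,t}^{[u,v]}$, using the defining upper bound of an admissible function. Combining the three cases yields $\|st\diamond f\|\geq\min\{\|su\|+\|ut\diamond f\|,\|sv\diamond f\|+\|vt\|,F_{s,t}^{[u,v]}\}$, which together with the upper bound closes the proof.

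The principal obstacle lies in correctly exploiting the single divergence--convergence structure to justify Case C: without that structural fact the detour could \emph{a priori} re-enter $st$ inside $[u,v]$, which would break the reduction to $\|st\diamond[u,v]\|$. Boundary subtleties when $f$ coincides with $u$ or $v$ are either vacuous or absorbed into whichever term of the minimum is tight, and so do not affect the core argument.
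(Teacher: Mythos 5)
Your proof is correct and follows essentially the same route as the paper's: the upper bound via admissibility together with the two trivial concatenation bounds, and the lower bound via the three-way case split according to whether $st\diamond f$ passes through $u$, passes through $v$, or skips the entire interval $[u,v]$. The only difference is cosmetic: you spell out, via the divergence and convergence vertices $\Delta,\nabla$, the single-detour structure that the paper invokes implicitly when asserting that a replacement path avoiding both $u$ and $v$ must skip all of $[u,v]$.
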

\begin{proof}
On the one hand, $\|st\diamond f\|$ is always smaller or equal to $\min\{\|su\| + \|ut\diamond f\|, \|sv\diamond f\| + \|vt\|, F_{s, t}^{[u, v]}\}$. This is because, by definition \textbf{3.1} $F_{s, t}^{[u, v]}\geq \|st\diamond f\|$; also it is easy to see that both of $\|su\| + \|ut\diamond f\|$ and $\|sv\diamond f\| + \|vt\|$ are $\geq \|st\diamond f\|$.

On the other hand, we argue $\|st\diamond f\| \geq \min\{\|su\| + \|ut\diamond f\|, \|sv\diamond f\| + \|vt\|, F_{s, t}^{[u, v]}\}$. If $st\diamond f$ passes through either $u$ or $v$, then $\|st\diamond f\|$ would be equal to either $\|su\| + \|ut\diamond f\|$ or $\|sv\diamond f\| + \|vt\|$, which is $\geq \min\{\|su\| + \|ut\diamond f\|, \|sv\diamond f\| + \|vt\|, F_{s, t}^{[u, v]}\}$. Otherwise $st\diamond f$ skips over the entire interval $[u, v]$ and thus $\|st\diamond f\| = \|st\diamond [u, v]\|\geq F_{s, t}^{[u, v]} \geq \min\{\|su\| + \|ut\diamond f\|, \|sv\diamond f\| + \|vt\|, F_{s, t}^{[u, v]}\}$.
\end{proof}

\section{The tree partition lemma}
	Our novel DSO begins with the following lemma. This lemma also appeared in \cite{virgi12}.
	
	\begin{lem}[Tree partition\label{treepart}]
		Given a rooted tree $\mathcal{T}$, and any integer $2\leq k\leq n = |V(\mathcal{T})|$, there exists a subset of vertices $M\subseteq V(\mathcal{T})$, $|M| \leq 3k - 5$, such that after removing all vertices in $M$, the tree $\mathcal{T}$ is partitioned into sub-trees of size $\leq n/k$. We call every $u\in M$ an $M$-marked vertex, and $M$ a marked set. Plus, such $M$ can be computed in $O(n\log k)$ time.
	\end{lem}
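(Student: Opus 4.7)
The plan is to build $M$ greedily by a single bottom-up sweep of $\mathcal{T}$. I would traverse the tree in post-order, maintaining at every vertex $v$ a counter $h(v)$ representing the size of the residual sub-component that currently hangs below $v$ at the moment $v$ is visited. Concretely, upon reaching $v$ I set $h(v) \leftarrow 1 + \sum_{c} h(c)$, where $c$ ranges over the children of $v$ that have not been marked; if the resulting value exceeds $n/k$, I add $v$ to $M$ and reset $h(v)\leftarrow 0$, otherwise $h(v)$ is passed unchanged up to $v$'s parent.

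Correctness --- that every connected component of $\mathcal{T}\setminus M$ has size at most $n/k$ --- would follow by an induction along the post-order. The invariant to prove is that for every unmarked $v$, $h(v)$ equals $|D(v)|$, where $D(v)$ denotes the set of descendants of $v$ (including $v$ itself) reachable from $v$ along a downward path that avoids $M$. Granted this, every component of $\mathcal{T}\setminus M$ can be identified with $D(u)$ for a unique ``top'' vertex $u$, namely the root of $\mathcal{T}$ or an unmarked child of a marked vertex; since such a $u$ was explicitly not marked, $h(u)\le n/k$, which transfers to a bound on the component size. I expect this step --- verifying that the components are exactly the $D(u)$'s and that the inductive identity $h(u)=|D(u)|$ really goes through (particularly the interplay between the post-order reset and the definition of $D(\cdot)$) --- to be the principal conceptual task.

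Bounding $|M|$ is then a simple charging argument: at the instant $v$ is added to $M$, the vertices accounted for in $h(v)$ are precisely $D(v)$ at that moment, and across different marked vertices these sets are pairwise disjoint, since each vertex's counter is zeroed as soon as its nearest marked ancestor appears, so no vertex is charged twice. Because each marked $v$ contributes $h(v)>n/k$, one has $|M|\cdot n/k < \sum_{v\in M} h(v) \le n$, so $|M| < k$, which is within $3k-5$ whenever $k\ge 2$. The algorithm does $O(1)$ work per vertex and therefore runs in $O(n) \subseteq O(n\log k)$ time, comfortably meeting the stated preprocessing bound.
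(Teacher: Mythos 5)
Your proof is correct, but it follows a genuinely different route from the paper. The paper proceeds by induction on $k$: the base case $k=2$ finds a single splitter vertex by walking down from the root (stepping into any child whose subtree still has $\geq n/2$ vertices), and the inductive step first partitions with parameter $\lceil k/2\rceil$ and then re-splits each of the at most $2\lceil k/2\rceil-1$ oversized pieces with the base case, yielding the bound $3k-5$ and running time $O(n\log k)$ via $T(k)=T(\lceil k/2\rceil)+O(n)$. Your single post-order greedy sweep avoids the recursion entirely, and the step you flagged as the main conceptual task does go through: because the traversal is post-order, every descendant of $v$ has its final marked/unmarked status by the time $v$ is processed, so the invariant $h(v)=|D(v)|$ holds with respect to the final $M$, each component of $\mathcal{T}\setminus M$ is exactly $D(u)$ for its topmost vertex $u$ (whose parent is marked or absent), and the charging argument is sound since every vertex is charged only to its nearest weakly-marked ancestor. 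What your argument buys is strictly more than the lemma asks: $|M|\leq k-1\leq 3k-5$ for all $k\geq 2$, and $O(n)$ rather than $O(n\log k)$ construction time; since the rest of the paper only uses the stated guarantees (component size $\leq n/k$, $|M|=O(k)$), your construction could be substituted without affecting anything downstream, including the observation that consecutive marked vertices on a root-to-leaf path of $T_s$ are at unweighted distance at most $L$.
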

	\begin{proof}
		We prove this claim by an induction on $k$.
		\begin{itemize}
			\item Basis $k = 2$.\\
			We argue that deleting one vertex from $\mathcal{T}$ can decompose the entire tree into sub-trees of size $\leq n/2$.
			
			First, we can compute the size of sub-tree rooted at every vertex by a traversal, which takes linear time. Then, consider the following procedure. To locate the one that needs to be deleted, start the search at root and travel down the tree. We will be met with two possible cases.
			\begin{enumerate}[(1)]
				\item If the sizes of all the sub-trees rooted immediately below our current position are $< n/2$, then we stop and return with the vertex at current position.
				\item Otherwise, we step onto the child whose corresponding sub-tree has $\geq n/2$ vertices.
			\end{enumerate}
			The procedure is guaranteed to terminate since the maximum sub-tree size rooted below the current position always decreases every time we step onto a child. Correctness is also ensured because during the process we maintain the invariant that if the current vertex is deleted from tree, the component containing the parent vertex is of size $\leq n/2$. Time complexity of this procedure is $O(n)$, because the searching visits every vertex no more than once.
			
			\item Induction for $k \geq 3$.\\
			Apply the lemma with parameter $\lceil\frac{k}{2}\rceil \in [2, k)$, and the tree $\mathcal{T}$ is broken into sub-trees of size $\leq n / \lceil\frac{k}{2}\rceil$. The total number of sub-trees whose sizes are $> n / (2\lceil\frac{k}{2}\rceil)$ is at most $2\lceil\frac{k}{2}\rceil - 1$. Applying the base case with parameter 2, we can further decompose all those ``larger'' sub-trees into sub-trees of size $\leq n / (2\lceil\frac{k}{2}\rceil) \leq n / k$ with no more than $2\lceil\frac{k}{2}\rceil - 1$ deletions. The total number of deletions would thus be $\leq 2\lceil\frac{k}{2}\rceil- 1 + 3\lceil\frac{k}{2}\rceil - 5 \leq 3k - 5$; the last inequality holds for all $k\geq 3$.
			
			Running time of this algorithm for general $k$ satisfies recursion $T(k) = T(\lceil k / 2\rceil) + O(n)$, which leads to $T(k) = O(n\log k)$.
		\end{itemize}
	\end{proof}

The high-level idea of our data structure is that we reduce the computation of an arbitrary $\|st\diamond f\|$ to a ``shorter'' $\|uv\diamond f\|$; here we say ``short'' in the sense that either $|uf|$ or $|fv|$ is small. The tree partition lemma helps us with the reduction. Basically, we apply the lemma \textbf{twice} with different parameters so that either $uf$ or $fv$ becomes ``short'' enough, and then we can directly retrieve the length of replacement path from storage.\\

For the rest of this paper, we say a replacement path $st\diamond f$ is $L$-short with respect to marked set $M$ and $T_s$ ($\widehat{T}_t$), if $t$ ($s$) and $f$ lie within the same sub-tree after we remove the entire marked vertex set $M$; here $M$ guarantees that each sub-tree has size $<L$. Sometimes we don't explicitly refer to the corresponding $M$ when discussing $L$-shortness, and nor do we specify which tree, $T_s$ or $\widehat{T}_t$, is being partitioned. These conditions are supposed to be easily learnt from context.

\section{Reducing to $\log^2 n$-short paths}
We devise an $O(n^2)$-space data structure that computes all non-$\log^2 n$-short paths in constant time.

	\subsection{Data structure \label{log2}}
	Just like in section \textbf{\ref{old}}, our DSO first pre-computes all values of $\|st\|$ and $|st|$, which accounts for $O(n^2)$ space. Then, for each $s\in V$, apply lemma \textbf{\ref{treepart}} in $T_s$ ($\widehat{T}_s$) to obtain a marked set $M_s$ ($\widehat{M}_s$) with parameter $\lceil n / (L-1)\rceil$, where $2\leq L\leq n$ is an integer to be set later. So $M_s$ ($\widehat{M}_s$) is of size $O(n / L)$, and the size of each sub-tree is $<L$. Consider the following structures. Note that we always build same structures in the reverse graph.\\
	
\begin{note}
The un-weighted distances between two adjacent marked vertices in $T_s$ ($\widehat{T}_s$) are $\leq L$.
\end{note}

	For any pair of $s, t$ such that $t\in M_s$, suppose we are met with $M_s$-marked vertices $u_1\rightarrow u_2\rightarrow\cdots\cdots\rightarrow u_k = t$ along the path $st$ in $T_s$. Our data structure consists of several parts.
	\begin{enumerate}[(i)]
		\item For each $k - 2^i\in [1, k-1]$, the value of $\|st\diamond u_{k - 2^i}\|$.
		\item For each $k - 2^i\in [1, k-2]$, the value of $\|st\diamond [u_{k - 2^i}, u_{k - 2^i + 1}]\|$.
		\item Let $v_l\rightarrow \cdots\cdots \rightarrow v_1$ be the sequence of all $\widehat{M}_t$-marked vertices along the path $st$. Then for each properly defined interval $[v_{l - 2^i}, u_{k - 2^j}]$ on the path $st$, store the value of $\|st\diamond [v_{l - 2^i}, u_{k - 2^j}]\|$.
		\item For each $f$ such that $|ft| \leq 2L$ or $|sf|\leq 2L$, the value of $\|st\diamond f\|$.\\

		From now on we drop the assumption that $t$ is $M_s$-marked. 		
		\item For every pair $(s, t)$ of different vertices, let $x$ be $t$'s nearest $M_s$-marked $T_s$-ancestor, and $y$ be $s$'s nearest $\widehat{M}_t$-marked $\widehat{T}_t$-ancestor. If intervals $(s, x]$ and $[y, t)$ intersect, then we pre-compute and store $\|st\diamond [y, x]\|$. Also, store addresses of $x, y$, if such ancestors exist.
		\item Build a tree upon all $M_s\cup \{s\}$'s vertices as follows. In this tree, $u$ is $v$'s parent if and only if in $T_s$ $u$ is $v$'s nearest ancestor that belongs to $M_s\cup \{s\}$. Then pre-compute and store the level-ancestor data structure of this tree. Note again that we also build similar structures for $\widehat{M}_s\cup\{s\}$ in the reverse graph.
	\end{enumerate}

	Conduct a simple space complexity analysis for each part of the data structure.
	\begin{enumerate}[(i)]
		\item takes up space $O(\frac{n^2\log n}{L})$ since we have $O(\log n)$ choices for the index $i$, and every $|M_s| = O(n / L)$.
		\item uses $O(\frac{n^2\log n}{L})$ for a similar reason in the previous part.
		\item demands $O(\frac{n^2\log^2 n}{L})$ space since we have $O(\log^2 n)$ choices for the pair of $(i, j)$.
		\item entails an $O(n^2)$ space consumption since each $M_s$-marked $t$ is associated with $O(L)$ entries, and there are $O(n / L)$ $M_s$-marked vertices $t$.
		\item induces $O(n^2)$ space complexity.
		\item takes $O(n^2 / L)$ total space, each tree of size $O(n / L)$.
	\end{enumerate}
	
	Therefore, the overall space complexity from (\romannumeral1) through (\romannumeral6) is equal to $O(\frac{n^2\log^2 n}{L} + n^2)$. Taking $L = \log^2 n$, it becomes $O(n^2)$.
	
	\subsection{Query algorithm}
	We prove the following reduction lemma in this sub-section.
	
	\begin{lem}
		The data structure specified in section \textbf{\ref{log2}} can compute $\|st\diamond f\|$ in $O(1)$ time if $st\diamond f$ is not $L$-short with respect to $M_s$ or $\widehat{M}_t$.
	\end{lem}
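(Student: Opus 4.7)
The plan is to compute $\|st\diamond f\|$ by combining the triple path lemma with the sparse-table-style structures (i)--(iv), using the $M$-marked vertices on $st$ as landmarks in place of the individual vertices used in Section~\ref{old}. First, structure (v) lets me retrieve in $O(1)$ the addresses of $x$ (the nearest $M_s$-ancestor of $t$) and $y$ (the nearest $\widehat M_t$-ancestor of $s$). The non-$L$-shortness hypothesis, read as \emph{neither} sub-tree coincidence holding, forces $y\le f\le x$ on $st$ and ensures the intersection condition of (v), so $\|st\diamond [y,x]\|$ is also stored.

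When $t\in M_s$ or, symmetrically, $s\in \widehat M_t$ in the reverse graph, the pair $(s,t)$ itself carries the dense structures (i)--(iv) and the query is handled on $(s,t)$ directly by the marked-endpoint routine described at the end of this paragraph. Otherwise I apply the triple path lemma on the interval $[y,x]$:
\[
\|st\diamond f\|=\min\bigl\{\|sy\|+\|yt\diamond f\|,\;\|sx\diamond f\|+\|xt\|,\;\|st\diamond[y,x]\|\bigr\},
\]
where the third term comes from (v), the prefix/suffix lengths $\|sy\|$ and $\|xt\|$ are precomputed all-pairs distances, and the two inner sub-queries reduce to replacement-path queries on $(s,x)$ and (in the reverse graph) $(t,y)$, both of which carry the dense structures because $x\in M_s$ and $y\in \widehat M_t$. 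For such a marked-endpoint sub-query on a pair $(s',t')$ with $t'\in M_{s'}$ I emulate the three-case analysis of Section~\ref{old} on the landmark sequences $u_1,\dots,u_k$ and $v_l,\dots,v_1$ along $s't'$: if $|s'f|\le 2L$ or $|ft'|\le 2L$ the answer is in (iv); otherwise pick $i,j$ maximal with $v_{l-2^i}\le f\le u_{k-2^j}$, take $\|s't'\diamond[v_{l-2^i},u_{k-2^j}]\|$ from (iii) as the admissible-function term of the triple path lemma on this interval, and collapse the two endpoint terms using (i), (ii), or (iv). Structure (vi) supplies the $O(1)$ level-ancestor queries needed to locate the bracketing landmarks $u_p,u_{p+1},v_q,v_{q+1}$ around $f$ in the first place.

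The main obstacle I expect is this last collapse: showing that the two endpoint sub-queries spawned by the inner triple path lemma really reduce to $O(1)$ stored lookups rather than recursing. The analogue of Section~\ref{old}'s key observation --- that ``$st\diamond f$ passing through $s\oplus 2^i$ forces it to pass through $f\ominus 2^i$'' --- must be redone on the coarser landmark scale, and the $2L$ slack baked into (iv) has to be shown to absorb exactly the one-landmark-gap error introduced by $f$ not sitting at a power-of-two-indexed marked vertex. Once that bookkeeping is verified symmetrically in $G$ and $\widehat G$, the total query cost is a constant number of table accesses, as the lemma claims.
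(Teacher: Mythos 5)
Your overall plan coincides with the paper's: use (v) and the triple path lemma on $[y,x]$ to reduce to sub-queries $\|sx\diamond f\|$ and $\|yt\diamond f\|$ with a marked endpoint, then bracket $f$ between marked landmarks and apply the triple path lemma again on an interval $[v_{q-2^i},u_{p-2^j}]$ whose avoidance value sits in (iii). Up to that point your argument is sound. But the step you yourself flag as ``the main obstacle''---collapsing the two endpoint terms of this second application into $O(1)$ stored lookups---is precisely the substantive content of the lemma, and you leave it unverified. Note that, e.g., $\|su_{p-2^j}\diamond f\|$ is \emph{not} in any of (i), (ii), (iv): $f$ need not be marked, and it can be up to $2^j$ landmark gaps (hence $\Theta(n)$ vertices) away from $u_{p-2^j}$ and more than $2L$ hops from $s$, so ``collapse using (i), (ii), or (iv)'' is not yet an argument.

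The paper closes this gap in three moves that your proposal does not supply. First, since any value in the range $[\|st\diamond f\|,\ \|su_{p-2^j}\diamond f\|+\|u_{p-2^j}t\|]$ may be substituted without spoiling the minimum, and maximality of $j$ forces the concatenation through $u_{a+2^j}$, one replaces the term by $\|su_{a+2^j}\diamond f\|+\|u_{a+2^j}t\|$. Second, a \emph{third} application of the triple path lemma, to $su_{a+2^j}\diamond f$ with interval $[u_a,u_{a+1}]$, splits this into: an interval term stored in (ii); the term $\|su_{a+1}\diamond f\|+\|u_{a+1}t\|$, available from (iv) because $u_{a+1}$ is $M_s$-marked and $|fu_{a+1}|\le L$; and the term $\|su_a\|+\|u_at\diamond f\|$. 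Third, this last case needs its own reduction: if $u_a$ is $\widehat{M}_t$-marked, (iv) applies since $|u_af|\le L$; otherwise one passes to $u_a$'s nearest $\widehat{M}_t$-marked ancestor $v$ and substitutes $\|sv\|+\|vt\diamond f\|$, with $\|vt\diamond f\|$ in (iv) because $|vf|\le 2L$, and if no such $v$ exists then $|sf|\le 2L$ and (iv) answers the whole query. This last sub-case (and its symmetric counterpart for $\|v_{q-2^i}t\diamond f\|$) is exactly the ``bookkeeping'' you postponed; without it the claimed $O(1)$ bound does not follow from the structures as defined, so the proposal as written has a genuine gap even though its skeleton matches the paper's proof.
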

	\begin{proof}
		A constant time verification for $L$-shortness is easy: we check if $f$ lies below the nearest $M_s$-marked $T_s$-ancestor of $t$, and similarly if $f$ lies below the nearest $\widehat{M}_t$-marked $\widehat{T}_t$-ancestor of $s$.\\
		
		Firstly we argue that it is without loss of generality to assume that $t$ is $M_s$-marked. The reduction proceeds as follows.
		
		Let $x$ and $y$ be vertices defined as in (\romannumeral5). Since $st\diamond f$ is not $L$-short, $f\in (s, x]\cap [y, t)$, and thus $[y, x]$ is a properly defined interval on path $st$. By the \textbf{triple path lemma}, one has: $$\|st\diamond f\| = \min\{\|sx\diamond f\| + \|xt\|, \|sy\| + \|yt\diamond f\|, \|st\diamond [y, x]\|\}$$
		Here we use the interval admissible function $\|st\diamond [y, x]\|$. Noticing that the third term $\|st\diamond [y, x]\|$ is already covered in (\romannumeral5) from \textbf{\ref{log2}}, we are left with $\|sx\diamond f\|$ and $\|yt\diamond f\|$. By definition, $x$ is $M_s$-marked and $y$ is $\widehat{M}_t$-marked, and thus we complete our reduction.\\

		Let $u_1\rightarrow u_2\rightarrow \cdots\cdots \rightarrow u_p = t$ be the sequence of all $M_s$-marked vertices along $st$. We can assume that $p > 1$; otherwise $\|st\diamond f\|$ has already been computed in structure (\romannumeral4).
		
		It is not hard to find the interval $[u_a, u_{a+1})$ that contains $f$. On the one hand, $u_a$ is easily retrieved: if $f$ not $M_s$-marked, then $u_a$ is its nearest marked ancestor stored in (\romannumeral5); otherwise, $u_a = f$. On the other hand, $u_{a+1}$ can be found by querying the level-ancestor data structure (\romannumeral6) at node $t$ in tree rooted at $s$.
		
		Let $v_q\rightarrow v_{q-1}\rightarrow \cdots\cdots \rightarrow v_1$ be the sequence of all $\widehat{M}_t$-marked vertices on $st$. It is also safe to assume $q > 1$; otherwise, we have $|st|\leq 2L$ and then using (\romannumeral4) we can directly compute $\|st\diamond f\|$. Similar to the previous paragraph, locate the interval $(v_{b+1}, v_b]$ that includes $f$. We only need to consider the case when $a+1 < p$ and $b+1 < q$, since otherwise $\|st\diamond f\|$ can be directly retrieved from structure (\romannumeral4).

		Find maximum indices $i, j\geq 0$ such that $q - 2^i \geq b+1$, $p - 2^j \geq a+1$. Applying the \textbf{triple path lemma} with respect to $[v_{q - 2^i}, u_{p - 2^j}]$ in terms of interval admissible function, $\|st\diamond f\|$ must be the minimum among the following three distances.
		
		\begin{enumerate}[(1)]
			\item $\|st\diamond [v_{q-2^i}, u_{p - 2^j}]\|$.\\
			This value is directly retrievable from (\romannumeral3) in section \textbf{\ref{log2}}.
			\item $\|su_{p - 2^j}\diamond f\| + \|u_{p - 2^j}t\|$.\\
			Note that since we are interested in the minimum among (1)(2)(3) which gives us $\|st\diamond f\|$, we can substitute any value for (2) that lies in range $[\|st\diamond f\|, \|su_{p - 2^j}\diamond f\| + \|u_{p - 2^j}t\|]$.
			
			By definition of $j$, $u_{a + 2^j}$ lies between $u_{p - 2^j}$ and $t$, and hence we know that the concatenation of paths $su_{p - 2^j}\diamond f$ and $u_{p - 2^j}t$ passes through $u_{a + 2^j}$. Thus, it must be $$\|su_{p - 2^j}\diamond f\| + \|u_{p - 2^j}t\|\geq \|su_{a + 2^j}\diamond f\| + \|u_{a + 2^j}t\|\geq \|st\diamond f\|$$
			
			So instead of computing the original (2), we are actually calculating $\|su_{a + 2^j}\diamond f\| + \|u_{a + 2^j}t\|$.
			
			We focus on the case when $f \neq u_a$; the case where $f = u_a$ is easy in that we can directly query $\|su_{a + 2^j}\diamond u_a\|$ using (\romannumeral1).
			
			Applying the \textbf{triple path lemma} for a third time to $su_{a + 2^j}\diamond f$ and interval $[u_a, u_{a+1}]$, we further divide it into three cases.
			
			\begin{enumerate}[(a)]
				\item $\|su_{a + 2^j}\diamond [u_a, u_{a+1}]\| + \|u_{a + 2^j}t\|$.\\
				This can be computed by a single table lookup in (\romannumeral2).
				\item $\|su_{a+1}\diamond f\| + \|u_{a+1}t\|$.\\
				Since $u_{a+1}$ is $M_s$-marked, $\|su_{a+1}\diamond f\|$ is stored in structure (\romannumeral4), and thereby $\|su_{a+1}\diamond f\| + \|u_{a+1}t\|$ is computed effortlessly.
				\item $\|su_a\| + \|u_a t\diamond f\|$.\\
				If $u_a$ itself is $\widehat{M}_t$ marked, then (\romannumeral4) directly help us out since $\|u_at\diamond f\|$ is already pre-computed as $|u_af|\leq L$.

				Otherwise, suppose $v$ is $u_a$'s nearest $\widehat{M}_t$-marked ancestor in $T_s$ (if any). To locate such $v$, we can try to find the interval $(v_{c+1}, v_c]$ that contains $u_a$, in a similar fashion of finding intervals $[u_a, u_{a+1})$ and $(v_{b+1}, v_b]$; after that we assign $v\leftarrow v_{c+1}$.
				
				If such $v$ does not exist, then $s$ and $u_a$ lie in the same sub-tree of $\widehat{T}_t$ after removing $\widehat{M}_t$. Noticing that $|sf| < |su_{a+1}| = |su_a| + |u_a u_{a+1}| \leq 2L$, (\romannumeral4) can finish up $\|st\diamond f\|$ by a single table look-up. If $v$ exists, then by $\|su_a\| + \|u_a t\diamond f\|\geq \|sv\| + \|vt\diamond f\|\geq \|st\diamond f\|$, it suffices to compute $\|vt\diamond f\|$, which also has already been pre-computed in (\romannumeral4) due to $|vf| = |vu_a| + |u_a f|\leq 2L$.
			\end{enumerate}

			\item $\|sv_{q - 2^i}\| + \|v_{q - 2^i}t\diamond f\|$.\\
			The only difficult part is $\|v_{q - 2^i}t\diamond f\|$. Similar arguments in the previous case (2) will still work.
		\end{enumerate}
				
	\end{proof}

\section{An $\langle O(n^2\log\log n), O(1)\rangle$ construction}
In this section, we present an ordinary way of handling $L$-short paths, resulting in an $\langle O(n^2\log\log n), O(1)\rangle$ DSO. On a high level, we directly apply the sparse table construction as in section \textbf{\ref{old}}, which is from \cite{Demet08}. But since the sparse table only needs to cover $L$-short paths, the space requirement shrinks to $O(\log L) = O(\log\log n)$ for every pair of $s, t\in V$. Hence the total space complexity would be $O(n^2\log\log n)$.

\subsection{Data structure \label{loglog}}
For any pair of $s, t\in V$, besides $\|st\|, |st|$, build the following structures.

\begin{enumerate}[(i)]
\item For every $2^i\leq \min\{4L, |st|\}$, store $\|st\diamond (s\oplus 2^i)\|$ and $\|st\diamond (t\ominus 2^i)\|$.
\item For every $2^{i+1}\leq \min\{4L, |st|\}$, store $\|st\diamond [s\oplus 2^i, s\oplus 2^{i+1}]\|$ and $\|st\diamond [t\ominus 2^{i+1}, t\ominus 2^i]\|$.
\item Level ancestor data structures of $T_s$ and $\widehat{T}_s$.
\end{enumerate}

Since $L = \log^2 n$, the total space of this structure is equal to $O(n^2\log L) = O(n^2\log\log n)$. Note that this structure is basically identical to the one from section \textbf{\ref{old}}, except for the additional bound $4L$ on the power-of-two's $2^i$. Therefore, when $|st|\leq 4L$, $\|st\diamond f\|$ can be retrieved in $O(1)$ time according to the correctness guaranteed by \cite{Demet08}.

\subsection{Query algorithm}
We prove the following lemma, showing how our data structure covers all $L$-short paths.

\begin{lem}
For any $st\diamond f$ such that $|sf|\leq L$ or $|ft|\leq L$, $\|st\diamond f\|$ can be computed in $O(1)$ time by the data structure presented in section \textbf{\ref{loglog}}.
\end{lem}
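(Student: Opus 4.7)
The plan is to adapt Demetrescu's sparse-table query from Section~\ref{old} to our truncated tables. By symmetry in the reverse graph (using the $(t\ominus 2^i)$-indexed entries of (i) and (ii) instead of the $(s\oplus 2^i)$-indexed ones), I may assume without loss of generality that $|sf|\le L$. When $|st|\le 4L$, the stored sparse table for the pair $(s,t)$ is already complete and Demetrescu's algorithm from Section~\ref{old} applies verbatim, as observed at the end of Section~\ref{loglog}; so suppose $|st|>4L$. If $|sf|=2^k$ is itself a power of two then $\|st\diamond f\|$ is a direct entry of (i), valid since $2^k\le L\le 4L$; otherwise let $i$ be maximal with $2^i<|sf|$, so $2^i<L$ and $2^{i+1}\le 2L\le 4L$.

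I would then invoke the triple path lemma on the interval $[s\oplus 2^i,\,s\oplus 2^{i+1}]$, which contains $f$, with the interval admissible function $\|st\diamond[s\oplus 2^i,s\oplus 2^{i+1}]\|$, reducing $\|st\diamond f\|$ to the minimum of three quantities. The interval term is stored in (ii). For the ``passes through $s\oplus 2^i$'' term I would mimic Demetrescu's case~(2): since $f\ominus 2^i$ lies between $s$ and $s\oplus 2^i$, replace the triple-path-lemma summand by $\|s(f\ominus 2^i)\|+\|(f\ominus 2^i)t\diamond((f\ominus 2^i)\oplus 2^i)\|$, which is always $\ge\|st\diamond f\|$ and equals it whenever the case actually holds, hence preserves the overall minimum; its second factor is a direct entry of (i) for the pair $(f\ominus 2^i,t)$ at index $2^i\le L$.

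The ``passes through $s\oplus 2^{i+1}$'' term $\|s(s\oplus 2^{i+1})\diamond f\|+\|(s\oplus 2^{i+1})t\|$ is the crux and the only real departure from Section~\ref{old}. Its second factor is a stored pairwise distance. For the first factor I would exploit the fact that the pair $(s,s\oplus 2^{i+1})$ has total length $2^{i+1}\le 4L$, so its sparse table is complete in our data structure; applying Demetrescu's algorithm to this shorter pair computes $\|s(s\oplus 2^{i+1})\diamond f\|$ in $O(1)$. This step is the main obstacle: Demetrescu's original case~(3) would instead decompose at $f\oplus 2^j$ with $2^j<|ft|$, an index that can exceed $4L$ when $|ft|$ is large and is not stored by us; truncating the path at $s\oplus 2^{i+1}$ first and recursing into the base case circumvents this issue. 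The case $|ft|\le L$ follows by the symmetric argument in $\widehat{G}$ using the $(t\ominus 2^i)$-entries, and every lookup plus the one-level sub-query costs $O(1)$ in total.
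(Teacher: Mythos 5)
Your proposal is correct and follows essentially the same route as the paper's proof: the paper also reduces to the case $|st|>4L$, applies the triple path lemma anchored at $s\oplus 2^i$, uses the same $f\ominus 2^i$ substitution for the left term, and handles the right endpoint term by observing that a pair of unweighted length $<4L$ has a complete truncated sparse table (the paper computes $\|s(s\oplus 2L)\diamond f\|$ this way, exactly as you compute $\|s(s\oplus 2^{i+1})\diamond f\|$). The only difference is cosmetic: the paper takes the interval $[s\oplus 2^i, s\oplus 2L]$ and then substitutes the stored $\|st\diamond[s\oplus 2^i,s\oplus 2^{i+1}]\|$ for the interval admissible term, whereas your choice of $[s\oplus 2^i, s\oplus 2^{i+1}]$ makes that term directly retrievable and skips one substitution.
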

\begin{proof}
By the symmetry of our data structure, we can assume without loss of generality that $|sf|\leq L$. It is also convenient to suppose that $|sf|$ is not a power of 2, otherwise $\|st\diamond f\|$ is directly retrievable from (\romannumeral1) in \textbf{\ref{loglog}}. If $|st|\leq 4L$, then query algorithm in section \textbf{\ref{old}} can answer $\|st\diamond f\|$ in constant time. Now let us assume $|st| > 4L$. Let $i$ be the largest index such that $2^i < |sf|$. Apply the \textbf{triple path lemma} for replacement path $st\diamond f$ and interval $[s\oplus 2^i, s\oplus 2L]$ and consider the following three distances. Note that since $2^i < |sf| \leq L < 2L$, $f$ lies between $s\oplus 2^i$ and $s\oplus 2L$.
\begin{enumerate}[(1)]
\item $\|s(s\oplus 2^i)\| + \|(s\oplus 2^i)t\diamond f\|$.\\
By definition of $i$, $f\ominus 2^i$ lies between $s$ and $s\oplus 2^i$. Then we have $$\|s(s\oplus 2^i)\| + \|(s\oplus 2^i)t\diamond f\|\geq \|s(f\ominus 2^i)\| + \|(f\ominus 2^i)t\diamond f\|\geq \|st\diamond f\|$$

Thereby it is legal to substitute $\|s(f\ominus 2^i)\| + \|(f\ominus 2^i)t\diamond f\|$ for (1). According to the construction in \textbf{\ref{loglog}}, both terms can be retrieved from memory in constant time.
\item $\|s(s\oplus 2L)\diamond f\| + \|(s\oplus 2L)t\|$.\\
The former term is computable in constant time because $|s(s\oplus 2L)| = 2L < 4L$.
\item $\|st\diamond [s\oplus 2^i, s\oplus 2L]\|$.\\
Since $2^{i+1} < 2|sf| \leq 2L$, $st\diamond f$ actually skips over the interval $f\in [s\oplus 2^i, s\oplus 2^{i+1}]$. Therefore, $$\|st\diamond [s\oplus 2^i, s\oplus 2L]\| \geq \|st\diamond [s\oplus 2^i, s\oplus 2^{i+1}]\|\geq \|st\diamond f\|$$

Substituting $\|st\diamond [s\oplus 2^i, s\oplus 2^{i+1}]\|$ for (3), we retrieve it via a single table look-up in (\romannumeral2) in constant time.
\end{enumerate}
\end{proof}

By definition, every $L$-short path $st\diamond f$ satisfies $|sf|\leq L$ or $|ft|\leq L$. So by the above lemma, data structure introduced in section \textbf{\ref{loglog}} answers every $L$-short path query. Together with the structures in section \textbf{\ref{log2}}, it makes an $\langle O(n^2\log\log n), O(1)\rangle$ DSO.

\section{Two-level partition}
In this section, we obtain an $\langle O(n^2), O(1)\rangle$ construction of DSO. The high-level idea is that we further partition every sub-tree into even smaller ones, and then we apply a tabulation (``Four Russians'') technique to store all answers. More specifically, we apply the tree-partitioning lemma for the second time and break each SSSP tree into sub-trees of size $\leq \log\log^2 n$. Then we devise data structures to reduce $\log^2 n$-short paths to $\log\log^2 n$-short paths. Finally, the tabulation technique kicks in when it comes to $\log\log^2 n$-short paths.

\subsection{Data structure \label{part2}}
Let $L^\prime \leq L$ be a parameter to be set later. For each SSSP tree $T_s$ ($\widehat{T}_s$), compute its tree partition with parameter $\lceil n / (L^\prime - 1)\rceil$ by Lemma~\ref{treepart}, and let $M^\prime_s$ ($\widehat{M^\prime}_s$) be the corresponding marked set. For any $L$-short path $st\diamond f$, without loss of generality assume that $t, f$ lie in the same sub-tree of $T_s$ after removing $M_s$, and let $r$ be the root of this sub-tree. Note that similar structures are also built for the case when $s, f$ lie in the same sub-tree of $\widehat{T}_t$.

\begin{enumerate}[(i)]
	\item If $t$ is not $M_s^{\prime}$-marked.\\
	Let $u$ be $t$'s nearest $M_{s}^{\prime}$-marked ancestor below $r$ (if such $u$ exists), and store the value of $\|st\diamond [r, u]\|$.
	\item If $t$ is $M_s^{\prime}$-marked.\\
	Let $u_1\rightarrow u_2\rightarrow \cdots\cdots \rightarrow u_k = t$ be the sequence of all $M_{s}^{\prime}$-marked ancestor along the directed path $rt$. Note that $k < L = O(\log^2 n)$. Then for each $k - 2^i \in [1, k-1]$, store the value of $\|st\diamond [r, u_{k - 2^i}]\|$. After that, for each $f\in [u_{k-1}, u_k)$ (define $u_0 = r$), store the value of $\|st\diamond f\|$.
	\item Build upon the marked set $M^\prime_s$ all structures from (\romannumeral1) to (\romannumeral6) in section \textbf{\ref{log2}}. The only difference is that we impose an additional constraint that $|st|\leq L$ on structures (\romannumeral1) through (\romannumeral3). It is not hard to verify that the space complexity of this part becomes $O(\frac{n^2\log^2 L}{L^\prime} + n^2) = O(\frac{n^2\log\log^2 n}{L^\prime} + n^2)$. So if $st\diamond f$ is non-$L^\prime$-short, with $|st|\leq L$, then applying lemma \textbf{\ref{log2}}, $\|st\diamond f\|$ can be answered in constant time.
\end{enumerate}

Note that the space complexity of (\romannumeral1) and (\romannumeral2) in section \textbf{\ref{part2}} is equal to $O(\frac{n^2\log\log n}{L^\prime} + n^2)$. Together with (\romannumeral3), the overall space complexity of the data structure is $O(n^2 + \frac{n^2\log\log n}{L^\prime} + \frac{n^2\log\log^2 n}{L^\prime})$. Taking $L^\prime = \log\log^2 n$, the space becomes $O(n^2)$.

\subsection{Reduction algorithm}
We prove the following lemma.

	\begin{lem}
		Given an $L$-short replacement path $st\diamond f$, the data structure in sections \textbf{\ref{part2}} and \textbf{\ref{log2}} can reduce $\|st\diamond f\|$ to a constant number of $\|uv\diamond f\|$'s, where $uv\diamond f$'s are $L^\prime$-short with respect to $M^\prime_u$ or $\widehat{M^\prime}_v$.
	\end{lem}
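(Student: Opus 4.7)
The plan is to apply the triple path lemma at most twice, leveraging the stored interval admissible functions from parts~(i) and~(ii) of section~\ref{part2} together with the section-\ref{log2} machinery packaged inside part~(iii), so that every sub-query produced is either a stored value, a call to the bounded-length DSO in~(iii), or a single $L'$-short query.

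Without loss of generality I take the $L$-shortness to hold with respect to $M_s$, so that $t$ and $f$ share an $M_s$-sub-tree of $T_s$ with root $r$, located in constant time via the level-ancestor structure on $M_s\cup\{s\}$. First I reduce to the case where $t$ is itself $M_s'$-marked. If $t$ is not $M_s'$-marked, I fetch its nearest $M_s'$-marked descendant of $r$ along $rt$, call it $u$, using the level-ancestor structure from part~(vi) of section~\ref{part2}. If $u$ does not exist, or if $f\in(u,t]$, then $t$ and $f$ lie in the same $M_s'$-sub-tree of $T_s$ and $\|st\diamond f\|$ is itself the $L'$-short query to output. Otherwise $f\in[r,u]$ and I invoke the triple path lemma with the interval $[r,u]$, using the stored $\|st\diamond[r,u]\|$ from part~(i). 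This gives three candidates: the stored value, $\|sr\|+\|rt\diamond f\|$ with $\|rt\diamond f\|$ dispatched to~(iii) since $|rt|\le L$, and $\|su\diamond f\|+\|ut\|$ whose target $u$ is $M_s'$-marked and is handed off to the next step.

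For $t$ already $M_s'$-marked, I enumerate the $M_s'$-marked vertices along $rt$ as $u_0=r,u_1,\ldots,u_k=t$ and find $a$ with $f\in[u_a,u_{a+1})$; when $a=k-1$, $\|st\diamond f\|$ is directly stored in part~(ii). Otherwise I take $i$ maximal with $k-2^i\ge a+1$ and apply the triple path lemma to $\|st\diamond f\|$ on the interval $[r,u_{k-2^i}]$ using the stored $\|st\diamond[r,u_{k-2^i}]\|$. The stored value and the term $\|sr\|+\|rt\diamond f\|$ are dispatched as before; for the delicate term $\|su_{k-2^i}\diamond f\|+\|u_{k-2^i}t\|$ I mimic the substitution of section~\ref{log2}: since $u_{a+2^i}$ lies between $u_{k-2^i}$ and $t$, replacing $u_{k-2^i}$ by $u_{a+2^i}$ yields an expression sandwiched between $\|st\diamond f\|$ and the original, and I expand $\|su_{a+2^i}\diamond f\|$ by a second triple path lemma on the interval $[u_a,u_{a+1}]$. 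Its three branches reduce to a call of~(iii) for $\|u_au_{a+2^i}\diamond f\|$ (since $|u_au_{a+2^i}|\le L$), a look-up in part~(iv) of section~\ref{log2} built on $M_s'$ inside~(iii) for $\|su_{a+1}\diamond f\|$ (since $|u_{a+1}f|\le 2L'$), and the diamond-pattern admissible function from part~(v) of section~\ref{log2}, again inside~(iii), for the interval term.

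The main obstacle I foresee is justifying the admissible function used in the second application: the natural $\|su_{a+2^i}\diamond[u_a,u_{a+1}]\|$ is not stored verbatim since $|su_{a+2^i}|$ may exceed $L$, so I must show that the diamond interval stored in part~(v), built on $M_s'$ inside~(iii), supplies an admissible function covering $[u_a,u_{a+1}]$. This reduces to checking that the nearest $M_s'$-marked ancestor of $u_{a+2^i}$ in $T_s$ and the nearest $\widehat{M'}_{u_{a+2^i}}$-marked ancestor of $s$ in $\widehat{T}_{u_{a+2^i}}$ bracket $[u_a,u_{a+1}]$, which follows because $u_a$ is itself $M_s'$-marked and lies on $su_{a+2^i}$. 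Modulo this verification, every branch bottoms out after constantly many applications of the triple path lemma, and each leaf is a stored value, a call to~(iii) on a pair of length at most $L$ (which by the section-\ref{log2} query algorithm outputs at most one $L'$-short query), or a direct $L'$-short query, giving the claimed $O(1)$-sized reduction.
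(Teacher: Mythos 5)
Your reduction follows the paper's route almost step for step (reduce to $M^\prime_s$-marked $t$ via the interval $[r,u]$ and part (i); the case $a+1=k$ via part (ii); the choice of $i$ with $k-2^i\geq a+1$; the sandwich substitution in the third branch), but the one step you yourself flagged as the obstacle is where the argument genuinely breaks. In your innermost application of the triple path lemma, on $su_{a+2^i}\diamond f$ with interval $[u_a,u_{a+1}]$, you need an admissible function $F$ satisfying $\|su_{a+2^i}\diamond[u_a,u_{a+1}]\|\geq F\geq\|su_{a+2^i}\diamond f\|$, and your proposed $F$ from part (v) of section \textbf{\ref{log2}} built on $M^\prime_s$ does not satisfy the first inequality. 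For the pair $(s,u_{a+2^i})$, the stored interval is $[y,x]$ where $x$ is the \emph{nearest} $M^\prime_s$-marked ancestor of $u_{a+2^i}$, i.e.\ $x=u_{a+2^i-1}$, which for $i\geq 2$ lies strictly below $u_{a+1}$; and $y$ is determined by $\widehat{M^\prime}_{u_{a+2^i}}$ and need not lie at or above $u_a$, nor even above $f$ (so $F\geq\|su_{a+2^i}\diamond f\|$ is not guaranteed either, and part (v) only stores an entry under an intersection condition). ``Bracketing'' $[u_a,u_{a+1}]$ is the wrong direction: avoiding a superset gives $\|su_{a+2^i}\diamond[y,x]\|\geq\|su_{a+2^i}\diamond[u_a,u_{a+1}]\|$, possibly strictly. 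The case where this matters is exactly the one you must get right: if $st\diamond f$ passes through $u_{k-2^i}$ and diverges before $u_a$, then $su_{a+2^i}\diamond f$ skips $[u_a,u_{a+1}]$ entirely, the two endpoint terms of your third application can be strictly larger than the truth, and with an inadmissible $F$ the overall minimum overestimates $\|st\diamond f\|$.

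The paper sidesteps this precisely by not using $u_{a+2^i}$ and $[u_a,u_{a+1}]$: it substitutes $u_{a+2^i+1}$ (which by maximality of $i$ also lies in $[u_{k-2^i},t]$, so the sandwich argument still applies) and applies the third triple path lemma on the interval $[r,u_{a+1}]$. Then the interval term $\|su_{a+2^i+1}\diamond[r,u_{a+1}]\|$ is stored verbatim in part (ii) of section \textbf{\ref{part2}}, because the marked-index gap $(a+2^i+1)-(a+1)=2^i$ is a power of two; the branch $\|sr\|+\|ru_{a+2^i+1}\diamond f\|$ is handled by (iii) when not $L^\prime$-short and otherwise delivered as an output of the reduction (the same caveat applies to your calls to (iii), e.g.\ for $\|rt\diamond f\|$, which your closing sentence only loosely acknowledges), and $\|su_{a+1}\diamond f\|$ is stored in (ii) (your alternative retrieval via the $M^\prime$-version of (iv) also works). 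So the fix is not a verification of your part-(v) claim, which is false as stated, but a different choice of substitute vertex and interval so that the admissible function is one of the stored entries.
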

	\begin{proof}
		Constant time verification of $L^\prime$-shortness is easy. So we only need to consider when $st\diamond f$ is not $L^\prime$-short.
		
			We first argue that we will only need to concentrate on situations where $t$ is $M^\prime_s$-marked. Here is the reduction. If $t$ is not $M^\prime_s$-marked, locate $t$'s nearest $M^\prime_s$-marked ancestor $u$ below $f$. Vertex $u$ must exist, since otherwise $st\diamond f$ itself would become an $L^\prime$-short path. Recall that $r$ is the root of the sub-tree after removing $M_s$, and since $st\diamond f$ is $L$-short, $f$ must be after $r$ in $st$.
			
			We apply the \textbf{triple path lemma} on $st\diamond f$ and interval $[r, u]$.
			
			\begin{enumerate}[(1)]
				\item $\|sr\| + \|rt\diamond f\|$.\\
				In this case, we only need to consider $\|rt\diamond f\|$.
				
				If $rt\diamond f$ is not $L^\prime$-short, then since $|rt|$ is guaranteed to be $\leq L$, similar to the proof of lemma \textbf{\ref{log2}}, $\|rt\diamond f\|$ can be computed in $O(1)$ time by (\romannumeral3).
				
				If $rt\diamond f$ is $L^\prime$-short, then we deliver $rt\diamond f$ as one result of the reductions.
				
				\begin{note}
					Here $rt\diamond f$'s $L^\prime$-shortness is defined by $M_r^\prime$ or $\widehat{M}_t^\prime$, with respect to tree $T_r$ or $\widehat{T}_t$.
				\end{note}
				
				\item $\|st\diamond [r, u]\|$.\\
				We retrieve this distance directly from (\romannumeral1) in section \textbf{\ref{part2}}.
				\item $\|su\diamond f\| + \|ut\|$.\\
				The reduction ends here. Next we will continue to work out $\|su\diamond f\|$.
			\end{enumerate}
			
			Now we can assume without loss of generality that $t$ is $M^\prime_s$-marked. Let $u_1\rightarrow u_2\rightarrow\cdots\cdots \rightarrow u_p = t$ be the sequence of all $M_s^\prime$-marked vertices along the path $rt$. Suppose $f\in [u_a, u_{a+1})$, $a\geq 0$ ($u_0 = r$). If $a+1 = p$, then (\romannumeral2) enables us to directly query $\|st\diamond f\|$. Otherwise, let $i$ be the maximum index such that $p - 2^i \geq a+1$. Invoke again the \textbf{triple path lemma} on $st\diamond f$ and interval $[r, u_{p - 2^i}]$.
			
			\begin{enumerate}[(1)]
				\item $\|sr\| + \|rt\diamond f\|$.\\
				This can be solved in a similar fashion as in the previous case (1).
				\item $\|st\diamond [r, u_{p - 2^i}]\|$.\\
				(\romannumeral2) gives us $\|st\diamond [r, u_{p - 2^i}]\|$ in constant time.
				\item $\|su_{p - 2^i}\diamond f\| + \|u_{p - 2^i}t\|$.
				Such a replacement path is guaranteed to pass through $u_{a + 2^i+1}$. Thereby, $$\|su_{p - 2^i}\diamond f\| + \|u_{p - 2^i}t\|\geq \|su_{a + 2^i + 1}\diamond f\| + \|u_{a + 2^i + 1}t\|\geq \|st\diamond f\|$$
				So it suffices to compute $\|su_{a + 2^i + 1}\diamond f\| + \|u_{a + 2^i + 1}t\|$. Invoking the \textbf{triple path lemma} on $su_{a + 2^i + 1}\diamond f$ and interval $[r, u_{a+1}]$, we divide it into the following three cases.
				
				\begin{enumerate}[(a)]
					\item $\|su_{a+1}\diamond f\| + \|u_{a+1}u_{a + 2^i + 1}\|$.\\
					$\|su_{a+1}\diamond f\|$ can be retrieved from (\romannumeral2).
					\item $\|sr\| + \|ru_{a + 2^i + 1}\diamond f\|$.\\
					If $ru_{a + 2^i +1}\diamond f$ is not $L^\prime$-short, then by $|ru_{a + 2^i + 1}| \leq L$, $\|ru_{a + 2^i + 1}\diamond f\|$ can be calculated in $O(1)$ time by (\romannumeral3) in section \textbf{\ref{part2}}, similar to the proof of lemma \textbf{\ref{log2}}.
					
					If $ru_{a + 2^i +1}\diamond f$ is $L^\prime$-short, then we deliver it as one result of the reductions.
					
					\begin{note}
						Here $ru_{a + 2^i + 1}\diamond f$'s $L^\prime$-shortness is defined by $M_r^\prime$ or $\widehat{M}_{u_{a + 2^i + 1}}^\prime$, with respect to tree $T_r$ or $\widehat{T}_{u_{a + 2^i +1}}$.
					\end{note}

					\item $\|su_{a + 2^i + 1}\diamond [r, u_{a+1}]\|$.\\
					$\|su_{a + 2^i + 1}\diamond [r, u_{a+1}]\|$ can be found from memory storage of (\romannumeral2) in section \textbf{\ref{part2}}.
				\end{enumerate}

			\end{enumerate}
			
	\end{proof}

\subsection{Tabulation}
In this sub-section, we handle all $L^\prime$-short paths. Recall that the notation $\Delta_{s, t, f}, \nabla_{s, t, f}$ refers to divergence and convergence of replacement path $st\diamond f$. We will simply write $\Delta, \nabla$ because indices $s, t, f$ can be learnt from context.

Let $st\diamond f$ be an $L^\prime$-short path; without loss of generality assume that $t, f$ lie in the same sub-tree, the corresponding marked set being $M_s^\prime$. One observation is that we only need to focus on cases where $|s\Delta| \leq L$: if the divergence comes after $s\oplus L$, then it admits the decomposition $\|st\diamond f\| = \|su\| + \|ut\diamond f\|$, $u$ being $s$'s nearest $\widehat{M}_t$-marked ancestor in tree $\widehat{T}_t$. Since $|ft| \leq L^\prime < L < 2L$, $\|ut\diamond f\|$ can be found in (\romannumeral4) from section \textbf{\ref{log2}}.

For each sub-tree $T$ partitioned by marked set $M^\prime_s$, we in-order sort all its vertices. The aggregate divergence / convergence information within this sub-tree can be summarized as an $L^\prime \times L^\prime$ matrix, each element being a pair $(|s\Delta|, |\nabla t|)$ corresponding to a replacement path $st\diamond f$, $\forall t, f\in V(T)$. Since we only consider the case when $|s\Delta| \leq L$, the total number of choices for this matrix is no more than $(L \cdot L^\prime)^{(L^\prime)^2} < L^{2(L^\prime)^2} = 2^{4\log\log^5 n} < O(2^{\log n}) = O(n)$.

Construct an indexable table of all possible configurations of such matrices. The space of this table is $\leq O(n\cdot (L^\prime)^2) = o(n^{1.1})$. Then associate each sub-tree with an index of its corresponding matrix in the table, which demands a storage of $O(n / L^\prime)$ indices, totalling $o(n)$ space for every $s$. Thus the overall space complexity associated with tabulation is $o(n^2)$.

Now the $L^\prime$-short $\|st\diamond f\|$ can be computed effortlessly. After indexing the corresponding matrix in the table, we can extract $(|s\Delta|, |\nabla t|)$ directly from this matrix, and then recover $\Delta, \nabla$ from level-ancestor data structures. Finally, decompose the replacement path as $\|st\diamond f\| = \|s\Delta\| + \|\Delta\nabla\diamond (\Delta, \nabla)\| + \|\nabla t\|$. Noticing that $\Delta\nabla\diamond f = \Delta\nabla\diamond (\Delta, \nabla)$, thereby the value of $\|\Delta\nabla\diamond f\|$ is equal to any admissible function value $F_{\Delta, \nabla}^{[\Delta\oplus 1, \nabla\ominus 1]}$. Hence, storing a $\|uv\diamond [u\oplus 1, v\ominus 1]\|$ for every pair of $u, v$ will suffice for querying $\|st\diamond f\|$ once divergence and convergence vertices are known.

\section{Concluding remarks}
It is also convenient to slightly extend our data structure to retrieve the entire replacement path in $O(|st\diamond f|)$ time, for any query $(s, t, f)$: for each entry in our DSO of the form $\|st\diamond A\|$, $A$ being a vertex set, store the first edge of the path $st\diamond A$. This adjustment does not affect our asymptotic space complexity. In this way, for any $(s, t, f)$, the DSO can answer not only $\|st\diamond f\|$, but also the first edge, say $(s, s^\prime)\in E$, of replacement path $st\diamond f$. To recover the entire $st\diamond f$, we recur on $s^\prime t\diamond f$. Since each iteration reveals a single edge, the total time would be $O(|st\diamond f|)$.

So far we have devised an $\langle O(n^2), O(1)\rangle$ DSO. Clearly both of the space complexity and query efficiency have reached asymptotic optima; also its preprocessing time is $\tilde{O}(mn)$ (see appendix), which is nearly optimal.
%%
%% Bibliography
%%

%% Either use bibtex (recommended), but commented out in this sample

\appendix

\section{Preprocessing algorithm}
In this section, we present a preprocessing algorithm that runs in $\tilde{O}(mn)$ time. Basically, computing replacement path that skips an entire interval is hard. To eschew such hardness, we employ the idea of \textsl{admissible functions} \cite{Bernst08} and change replacement paths of the form $st\diamond [u, v]$ to certain $st\diamond f$ ones without harming the correctness of query algorithms. \emph{Indeed, the data structure we constructs here is not exactly the one we have described, but an equivalent one.} Then, to build our DSO, we first apply the randomized preprocessing algorithm from \cite{Bernst09} that runs in time $O(mn\log n + n^2\log^2 n)$ to obtain a less space-efficient DSO. Finally, using this intermediate DSO we can compute the $\langle O(n^2), O(1)\rangle$ DSO in $\tilde{O}(mn)$ time.

\begin{note}
	Although invoking \cite{Bernst09}'s algorithm as a black-box may introduce randomness, rest are entirely deterministic. So if we apply the deterministic but slightly less efficient preprocessing algorithm that also appeared in \cite{Bernst09}, our initializing algorithm can then become deterministic, still with $\tilde{O}(mn)$ time bound.
\end{note}

\subsection{Preliminary}
Values of $\|st\|$ and $|st|$ can be handled by invoking standard all-pairs shortest paths (APSP) algorithms,  which will not be our bottleneck. Computing Interval admissible function values is time consuming. In our preprocessing algorithm, instead of computing entry values of the form $\|st\diamond [u,v]\|$, we compute $\max_{f\in[u,v]}\{\|st\diamond f\|\}$; namely we replace every interval admissible function value demanded by our DSO with its bottleneck counterpart. For example, in (\romannumeral2) from \ref{log2}, what we actually compute is $\max_{f\in[u_{k- 2^i} ,u_{k- 2^i +1}]} {\|st\diamond f\|}$. All the query / reduction algorithms need to be adapted accordingly. This is not hard because the triple path lemma holds for arbitrary admissible functions.

Hence, our $\langle O(n^2), O(1)\rangle$ DSO is mostly composed of entries of the form $\max_{f\in[u, v]}\|st\diamond [u, v]\|$. Once all such entries are computed, other parts (say, tabulation) become easy. Therefore we only need to deal with entries that take the form of $\max_{f\in[u, v]}\|st\diamond [u, v]\|$.

The idea is to build an oracle that answers any query $\max_{f\in[u, v]}\|st\diamond f\|$ in $O(\log n)$ time; note that this oracle is more powerful than ordinary DSO. As we will see, this stronger oracle can be constructed in time $\tilde{O}(mn)$. Firstly we invoke \cite{Bernst09}'s algorithm that builds an $\langle O(n^2\log n), O(1)\rangle$ DSO in $\tilde{O}(mn)$ time, by which queries of $\|st\diamond f\|$ can be computed in constant time. However, we cannot afford to peek every value of $\|st\diamond f\|, f\in[u, v]$ in order to obtain $\max_{f\in [u, v]}\|st\diamond f\|$, since otherwise the preprocessing time would be blown up to an undesirable $\Omega(n^3)$. Fortunately, an important sub-routine from \cite{Bernst09}'s algorithm directly helps us out: instead of grid-searching along the interval $[u, v]$, one can perform a certain kind of binary search on $[u, v]$ to identify the bottleneck $f$ that maximizes $\|st\diamond f\|$.

We inherit some notations from \cite{Bernst09}. For any interval $[u, v]$ on the shortest path $st$, as well as a presumably failed vertex $f\in [u, v]$, define:
$$MTC(s, t, f, u, v) \overset{\Delta}{=} \min\{\|su\| + \|ut\diamond f\|, \|sv\diamond f\| + \|vt\|\}$$

So by the triple path lemma, $\|st\diamond f\| = \min\{MTC(s, t, f, u, v), \|st\diamond [u, v]\|\}$. A key observation from \cite{Bernst09} is that, the bottleneck vertex $f$ always maximizes the first term $MTC(s, t, f, u, v)$, because the second term $\|st\diamond [u, v]\|$ is independent of $f$. Therefore, in order to locate bottleneck vertex $f$, it is equivalent to maximizing $MTC(s, t, f, u, v)$.

The following lemma will also be useful.
\begin{lem}[\cite{Bernst09}]\label{skip}
If $\|su\| + \|ut\diamond f\| > \|sv\diamond f\| + \|vt\|$, then $sv\diamond f$ skips over $u$.
\end{lem}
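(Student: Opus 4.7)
I plan to prove this by contrapositive: assuming $sv\diamond f$ passes through $u$, I will derive $\|su\| + \|ut\diamond f\| \leq \|sv\diamond f\| + \|vt\|$, which directly contradicts the hypothesis. The geometric picture is that if the optimal $s$-to-$v$ detour around $f$ happened to use $u$, then it would already contain a good $s$-to-$u$ sub-path and a good $u$-to-$v$ detour, which could be glued with $vt$ to obtain an $s$-to-$t$-avoiding-$f$ path that beats the left-hand side of the hypothesis.

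Concretely, the first step is to note that $u$ precedes $f$ on the shortest path $st$, so $f$ does not lie on $su$; by uniqueness of shortest paths, $su$ is therefore also the shortest $s$-to-$u$ path that avoids $f$. Hence, if $sv\diamond f$ is routed through $u$, its prefix to $u$ has weight at least $\|su\|$, and its suffix from $u$ to $v$ is a $u$-to-$v$ path avoiding $f$, giving
\[
\|sv\diamond f\| \;\geq\; \|su\| + \|uv\diamond f\|.
\]
The second step uses the symmetric observation that $f$ lies strictly before $v$ on $st$, so $f \notin vt$ and the shortest-path suffix $vt$ is unaffected by the failure. Concatenating $uv\diamond f$ with $vt$ yields a legal $u$-to-$t$ walk avoiding $f$, so
\[
\|uv\diamond f\| + \|vt\| \;\geq\; \|ut\diamond f\|.
\]
Adding $\|vt\|$ to the first inequality and substituting via the second gives
\[
\|sv\diamond f\| + \|vt\| \;\geq\; \|su\| + \|uv\diamond f\| + \|vt\| \;\geq\; \|su\| + \|ut\diamond f\|,
\]
contradicting the hypothesis; hence $sv\diamond f$ cannot pass through $u$.

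I do not anticipate a real obstacle here — the only thing to be careful about is correctly using uniqueness of shortest paths to justify that the sub-paths of $sv\diamond f$ lying entirely on the $f$-free portions of $st$ are themselves shortest paths, rather than merely upper bounds on sub-path lengths. Once that observation is in place, the proof is just a two-line triangle-style inequality.
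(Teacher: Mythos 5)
Your proof is correct: the contrapositive argument --- if $sv\diamond f$ passed through $u$, split it at $u$ and append $vt$ (which avoids $f$ because, by uniqueness of shortest paths, $vt$ is the suffix of $st$ and $f$ lies strictly before $v$), giving $\|sv\diamond f\|+\|vt\|\geq \|su\|+\|uv\diamond f\|+\|vt\|\geq \|su\|+\|ut\diamond f\|$ --- is exactly the standard justification of this fact. The paper itself states the lemma without proof, citing Bernstein and Karger, so your write-up simply supplies the intended argument; the only omitted (and trivial) cases are the endpoints, since for $f=u$ the path $sv\diamond f$ avoids $u=f$ by definition and for $f=v$ the hypothesis cannot hold.
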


\subsection{Data structure}
In this section, we introduce the structures of our stronger oracle that answers any queries of form $\max_{f\in[u, v]}\|st\diamond f\|$.

Generally speaking, we apply a scaling technique that deals with intervals $[u, v]$ of different sizes. For any $1\leq i\leq \log_2 n$, For every reverse SSSP tree $\widehat{T}_t$, apply the tree partition lemma \ref{treepart} with $L = 2^i$ and obtain a marked set $B_t^i$, $|B_t^i| = O(n / 2^i)$. For any marked vertex $s\in B_t^i$, construct an array of length $2^{i+2}$, for each $f\in (s, s\oplus 2^{i+1}]$ fill in the $|sf|^{th}$ entry with value $\|st\diamond f\|$ (again note that this value is retrievable in constant time via the intermediate DSO from \cite{Bernst09}). If $|st|\leq 2^{i+2}$, we simply cover the entire $st$. After that, build a \textsl{range maximum query} data (RMQ) structure on this array, which allows us to locate the bottleneck vertex in constant time on any interval within $(s, s\oplus 2^{i+2}]$. Symmetrically, we compute set marked $\widehat{B}_s^i$ associated with each SSSP tree $T_s$ and build a similar RMQ.

For each $i$, the construction time is only $O(n\cdot n/2^i\cdot 2^{i+2}) = O(n^2)$. So the overall construction time is $\tilde{O}(n^2)$.

\subsection{Query algorithm}
In this section, we show how bottleneck vertices can be efficiently computed using data structures in the previous section.

Suppose we wish to compute a bottleneck vertex in $[u, v]$ on $st$. Assume $2^{i+1}\leq |uv| < 2^{i+2}$, $i>0$. Since $|uv| \geq 2^{i+1}$, both of $\widehat{B}_s^i\cap ((u+v)/2, v]$ and $B_t^i\cap [u, (u+v)/2]$ are non-empty. Using level ancestor data structures we can locate a vertex $u^\prime\in B_t^i\cap [u, (u+v)/2]$ and $v^\prime\in \widehat{B}_s^i\cap ((u+v)/2, v]$.

To finish up, now we can directly apply the poly-log time sub-routine from \cite{Bernst09} to locate the bottleneck vertex on interval $[u^\prime, v]$, as described in a recursive procedure \textsc{FindBot} below, where $s, t, u^\prime, v$ are deemed global parameters, and the input $[x, y]\subseteq [u^\prime, v]$ is a sub-interval. The bottleneck vertex on interval $[u, v^\prime]$ can be computed in a symmetrical way.

\begin{algorithm}
\caption{\textsc{FindBot$([x, y])$} from \protect\cite{Bernst09}}
\If{$|xy| \leq 2$}{
	compute the botttleneck vertex in the trivial way and \Return\;
}
$q\leftarrow (x+ y)/2$, namely the mid-vertex on $xy$\;
$b\leftarrow$ the bottleneck vertex with respect to interval $[q, y]$ on $u^\prime t$, using RMQ\;
\If{$\|su^\prime\| + \|u^\prime t\diamond b\|\leq \|sv\diamond b\| + \|vt\|$}{
	$w\leftarrow$ \textsc{FindBot}$([x, q))$\;
	\Return $\arg\max_{b, w}\{MTC(s, t, b, u^\prime, v), MTC(s, t, w, u^\prime, v)\}$\;
}\Else {
	\Return \textsc{FindBot}$([q, y])$\;
}
\end{algorithm}

Clearly the running time is $O(\log n)$. Therefore, our query algorithm is $O(\log n)$. Correctness follows from exactly the same arguments as in \cite{Bernst09}, but for completeness we restate the proof below.

We show that \textsc{FindBot}$([u^\prime, v])$ finds a vertex $f$ that maximizes $MTC(s, t, f, u^\prime, v)$. Consider an arbitrary level of recursion with input $[x, y]$. Let's focus on line-5. If the branching condition holds true, then it suffices to prove that the bottleneck vertex with respect to interval $[q, y]$ is $b$ itself. In fact, for any $f\in[q, y]$, we have:
$$MTC(s, t, f, u^\prime, v)\leq \|su^\prime\| + \|u^\prime t\diamond f\|\leq \|su^\prime\| + \|u^\prime t\diamond b\| = MTC(s, t, b, u^\prime, v)$$
The second inequality is guaranteed by the function of RMQ.

Alternatively, if the branching condition is false, namely $\|su^\prime\| + \|u^\prime t\diamond b\| > \|sv\diamond b\| + \|vt\|$, we argue that the bottleneck vertex cannot be in interval $[x, q)$. For any $f\in [x, q)$, we have:
$$MTC(s, t, f, u^\prime, v)\leq \|sv\diamond f\| + \|vt\|\leq \|sv\diamond b\| + \|vt\| = MTC(s, t, b, u^\prime, v)$$
The second inequality holds because according to lemma \ref{skip}, $sv\diamond b$ skips over $u^\prime$, thus the entire interval $[u^\prime, q]$ which subsumes $f$, and this yields $\|sv\diamond f\|\leq \|sv\diamond b\|$.

\end{document}